\newtheorem{theorem}{Theorem}[section]
\newtheorem{remark}[theorem]{Remark}
\newtheorem{definition}[theorem]{Definition}
\numberwithin{equation}{section}
\newcommand{\sn}[1]{{\rm sn}\left(#1\right)}
\newcommand{\cn}[1]{{\rm cn}\left(#1\right)}
\newcommand{\dn}[1]{{\rm dn}\left(#1\right)}
\newcommand{\cd}[1]{{\rm cd}\left(#1\right)}
\newcommand{\JT}[1]{\Theta\left(#1\right)}
\newcommand{\JH}[1]{H\left(#1\right)}
\newcommand{\JTK}[1]{\Theta_1\left(#1\right)}
\newcommand{\JHK}[1]{H_1\left(#1\right)}
\newcommand{\ssn}[1]{{\rm sn}^2\left(#1\right)}
\newcommand{\ccn}[1]{{\rm cn}^2\left(#1\right)}
\newcommand{\ddn}[1]{{\rm dn}^2\left(#1\right)}
\newcommand{\we}[1]{\wp\left(#1\right)}
\newcommand{\pP}[2]{P_{#1}\left(#2\right)}
\newcommand{\pQ}[2]{Q_{#1}\left(#2\right)}
\newcommand{\gG}[1]{G_{#1}\,}
\newcommand{\bfrac}[2]{\left(\frac{#1}{#2}\right)}
\newcommand{\bcfrac}[2]{\left(\cfrac{#1}{#2}\right)}
\newcommand{\lrangle}[1]{\langle#1\rangle}
\newcommand{\ox}{\overline{x}}
\newcommand{\oy}{\overline{y}}
\newcommand{\tx}{\tilde{x}}
\newcommand{\ttx}{\tilde{\tx}}
\newcommand{\ty}{\tilde{y}}
\newcommand{\la}{\lambda}
\newcommand{\ka}{\kappa}
\newcommand{\ii}{{\rm i}}
\newcommand{\Th}{\Theta}
\newcommand{\ep}{\epsilon}
\newcommand{\de}{\delta}
\newcommand{\al}{\alpha}
\newcommand{\io}{\iota}
\newcommand{\bbP}{\mathbb{P}}
\newcommand{\bbZ}{\mathbb{Z}}
\newcommand{\PicX}{{\rm Pic}(X)}
\newcommand{\WEe}{W\big(E_8^{(1)}\big)}
\newcommand{\tWEe}{\widetilde{W}\big(E_8^{(1)}\big)}
\newcommand{\aaa}[1]{a_{#1}}
\newcommand{\cc}[1]{c_{#1}}
\newcommand{\EE}[1]{E_{#1}}
\newcommand{\TJ}[1]{T_{J,#1}}
\newcommand{\RJ}[1]{R_{J,#1}}
\newcommand{\gae}{\gamma_{\rm e}}
\newcommand{\gao}{\gamma_{\rm o}}
\newcommand{\sz}{{\rm sz}}
\newcommand{\cz}{{\rm cz}}
\newcommand{\dz}{{\rm dz}}
\newcommand{\hsz}{{\rm \widehat{sz}}}
\newcommand{\hcz}{{\rm \widehat{cz}}}
\newcommand{\hdz}{{\rm \widehat{dz}}}
\newcommand{\sge}{{\rm sg}_{\rm e}}
\newcommand{\cge}{{\rm cg}_{\rm e}}
\newcommand{\dge}{{\rm dg}_{\rm e}}
\newcommand{\sgo}{{\rm sg}_{\rm o}}
\newcommand{\cgo}{{\rm cg}_{\rm o}}
\newcommand{\dgo}{{\rm dg}_{\rm o}}
\long\def\@makecaption#1#2{
 \vskip 10pt
 \setbox\@tempboxa\hbox{#1. #2}
 \ifdim \wd\@tempboxa >\hsize #1. #2\par \else \hbox
to\hsize{\hfil\box\@tempboxa\hfil}
 \fi}
\begin{document}
%%%%%%%%%%%%%%%%%%%%%%%%%%%%%%%%%%%%%%%%%%%%%%%%%%%%%%%
%%% title
%%%%%%%%%%%%%%%%%%%%%%%%%%%%%%%%%%%%%%%%%%%%%%%%%%%%%%%
\title{Elliptic Painlev\'e equations from next-nearest-neighbor translations on the $E_8^{(1)}$ lattice}
\author{Nalini Joshi}
\author{Nobutaka Nakazono}
\address{School of Mathematics and Statistics F07, The University of Sydney, New South Wales 2006, Australia.}
\email{nobua.n1222@gmail.com}
%%%%%%%%%%%%%%%%%%%%%%%%%%%%%%%%%%%%%%%%%%%%%%%%%%%%%%%
%% Abstract
%%%%%%%%%%%%%%%%%%%%%%%%%%%%%%%%%%%%%%%%%%%%%%%%%%%%%%%
\begin{abstract}
The well known elliptic discrete Painlev\'e equation of Sakai is constructed by a standard translation on the $E_8^{(1)}$ lattice, given by nearest neighbor vectors. 
In this paper, we give a new elliptic discrete Painlev\'e equation obtained by translations along next-nearest-neighbor vectors. 
This equation is a generic (8-parameter) version of a 2-parameter elliptic difference equation found by reduction from Adler's partial difference equation, the so-called Q4 equation. 
We also provide a projective reduction of the well known equation of Sakai. 
\end{abstract}
%%%%%%%%%%%%%%%%%%%%%%%%%%%%%%%%%%%%%%%%%%%%%%%%%%%%%%%
%%%%%%%%%%%%%%%%%%%%%%%%%%%%%%%%%%%%%%%%%%%%%%%%%%%%%%%
\subjclass[2010]{
14H70, % Relationships with integrable systems
33E05, % Elliptic functions and integrals
33E17, % Painlev\'e-type functions
}
\keywords{ 
discrete Painlev\'e equation; 
elliptic Painlev\'e equation;
affine Weyl group;
elliptic function;
projective reduction
}
\maketitle
\setcounter{tocdepth}{1}
%\tableofcontents

%%%%%%%%%%%%%%%%%%%%%%%%%%%%%%%%%%%%%%%%%%%%%%%%%%%%%%%%%%%
%% 1. Introduction
%%%%%%%%%%%%%%%%%%%%%%%%%%%%%%%%%%%%%%%%%%%%%%%%%%%%%%%%%%%
\section{Introduction}
\label{section:introduction}
At the head of the list of discrete Painlev\'e equations described by Sakai sits an elliptic difference equation, which has attracted a great deal of attention in recent times. 
This equation \cite{SakaiH2001:MR1882403,MSY2003:MR1958273} has the affine Weyl group symmetry of type $E_8^{(1)}$ and a perennial question is whether it is unique as the only elliptic-difference-type discrete Painlev\'e equation. 
While two other candidate equations are now known \cite{ORG2001:MR1877472,RCG2009:MR2525848}, these are related to Sakai's elliptic difference equation. 
In this paper, we deduce elliptic-difference equations for the first time, which are not related to Sakai's equation by either a Miura transformation or by projective reduction. 
We do this by considering different (non-conjugate) translations on the $E_8^{(1)}$ lattice and describing these translations under the action of Jacobian elliptic functions.

A discrete Painlev\'e equation is an ordinary difference equation, which is iterated by translation in an affine Weyl group, where translation corresponds to vectors in the root- or (equivalently) weight-lattice \cite{book_CSBBLNOPQV2013:Sphere,book_HumphreysJE1992:Reflection}.
Sakai's elliptic difference equation is realized by translation on the $E_8^{(1)}$ lattice.
 
To describe the construction, we first explain how such translations are characterized. 
Fix a point in the $E_8^{(1)}$ lattice. 
Then there are 240 nearest neighbors of this point in the lattice, lying at a distance whose squared length is equal to 2. 
We refer to the 120 vectors between this fixed point and its possible nearest neighbors as nearest-neighbor-connecting vectors (NVs).
Similarly, there are 2160 next-nearest neighbors, lying at a distance whose squared length is 4.
The 1080 vectors between the fixed point and such next-nearest neighbors will be referred to as next-nearest-neighbor-connecting vectors (NNVs). 
In \cite{SakaiH2001:MR1882403} and \cite{ORG2001:MR1877472}, elliptic difference equations were constructed as translations expressed in terms of NVs. 

In a previous study \cite{AHJN2016:MR3509963}, we considered another elliptic difference equation (Equation \eqref{eqns:RCGeqn}), which was obtained by reduction from a partial difference equation \cite{RCG2009:MR2525848}. 
Curiously, its symmetry group turns out to be $W(F_4^{(1)})$, a sub-group of $W(E_8^{(1)})$. 
Moreover, its time iteration turns out not to be given by translation on the $E_8^{(1)}$ lattice. 
However, its square (i.e., composition with itself) {\em is} a translation. 
We have discovered that this translation is expressed in terms of NNVs, which makes it very different to the elliptic difference equations in earlier papers \cite{SakaiH2001:MR1882403,ORG2001:MR1877472}.
Note that independent study by Carstea {\it et al.} \cite{CDT2017:arXiv170204907} has also led to equations with translations given by NNVs.

Generalizing this surprising insight led us to the discovery of a new elliptic Painlev\'e equation \eqref{eqns:TJ1}, which can be regarded as the generic version of Equation \eqref{eqns:RCGeqn}. 
The term generic here refers to the fact that it contains the largest number of parameters possible as an equation with symmetry group $W(E_8^{(1)})$. 
It has 8 parameters in addition to the independent variable. 
Moreover, we also obtain a projective reduction of Sakai's elliptic difference equation \eqref{eqn:MSY_dP_PR}, which contains 4 parameters in addition to the independent variable.

All of these elliptic difference equations have the same space of initial values, regularized by blowing up 8 points in arbitrary position in $\bbP^1\times\bbP^1$. 
The space of initial values is labelled as $A_0^{(1)}$. 
We emphasize here that this characterization of the space of initial values is not enough to distinguish the four different elliptic difference equations described above. 
We provide evidence that different (non-conjugate) translations in $\tWEe$ lead to distinct elliptic difference equations. 
All four elliptic difference equations are described in this paper in terms of Jacobian elliptic functions for the sake of uniformity, because this is the form originally given for Equation \eqref{eqns:RCGeqn}. 

%%%%%%%%%%%%%%%%%%%%%%%%%%%%%%%%%%%%%%%%%%%%%%%%%%%%%%%%%%%
%% 1.1 Background
%%%%%%%%%%%%%%%%%%%%%%%%%%%%%%%%%%%%%%%%%%%%%%%%%%%%%%%%%%%
\subsection{Background}
Discrete Painlev\'e equations are nonlinear ordinary difference equations of second order, 
which include discrete analogues of the six Painlev\'e equations: P$_{\rm I}$, $\dots$, P$_{\rm VI}$. 
Together with the Painlev\'e equations, the discrete Painlev\'e equations are now regarded as one of the most important classes of equations in the theory of integrable systems (see, e.g., \cite{GR2004:MR2087743}). 

Sakai's geometric description of discrete Painlev\'e equations, based on types of space of initial values,
is well known\cite{SakaiH2001:MR1882403}.
This picture relies on compactifying and regularizing space of initial values.
The spaces of initial values are constructed by the blow up of $\bbP^1\times\bbP^1$ at eight base points
(i.e. points where the system is ill defined because it approaches $0/0$)
and are classified into 22 types according to the configuration of the base points as follows:
%%%%%%%%%%%%%%%%%%%%%%%%%%%%%%
%% Table
%%%%%%%%%%%%%%%%%%%%%%%%%%%%%%
\begin{center}
\begin{tabular}{|l|l|}
\hline
Discrete type&Type of surface\\
\hline
Elliptic&$A_0^{(1)}$\rule[-.5em]{0em}{1.6em}\\
\hline
Multiplicative&$A_0^{(1)\ast}$, $A_1^{(1)}$, $A_2^{(1)}$, $A_3^{(1)}$, \dots, $A_8^{(1)}$, $A_7^{(1)'}$\rule[-.5em]{0em}{1.6em}\\
\hline
Additive&$A_0^{(1)\ast\ast}$, $A_1^{(1)\ast}$, $A_2^{(1)\ast}$, $D_4^{(1)}$, \dots, $D_8^{(1)}$, $E_6^{(1)}$, $E_7^{(1)}$, $E_8^{(1)}$\rule[-.5em]{0em}{1.6em}\\
\hline
\end{tabular}
\end{center}
%%%%%%%%%%%%%%%%%%%%%%%%%%%%%%
%%%%%%%%%%%%%%%%%%%%%%%%%%%%%%
In each case, the root system characterizing the surface forms a subgroup of the 10-dimensional Picard group. 
The symmetry group of each equation, formed by Cremona isometries, arises from the orthogonal complement of this root system inside the Picard group.
Its birational actions give the discrete Painlev\'e equation of interest in each case. 

Recently, the following elliptic Painlev\'e equation was obtained from the reduction of Adler's partial difference equation (or, Q4 equation)\cite{RCG2009:MR2525848}:
\begin{subequations}\label{eqns:RCGeqn}
\begin{align}
 &\ty=\frac{(1-k^2\sz^4)\cge\dge\, xy-(\cge^2-\cz^2)\cz\,\dz-(1-k^2\sge^2 \sz^2)\cz\,\dz\, x^2}
 {k^2(\cge^2-\cz^2)\cz\,\dz\, x^2 y-(1-k^2 \sz^4)\cge\dge\, x+(1-k^2\sge^2\sz^2)\cz\,\dz\,y},\\
 &\tx=\frac{(1-k^2\hsz^4)\cgo\dgo\, \ty x-(\cgo^2-\hcz^2)\hcz\,\hdz-(1-k^2\sgo^2 \hsz^2)\hcz\,\hdz\,\ty^2}
 {k^2(\cgo^2-\hcz^2)\hcz\,\hdz\, \ty^2x-(1-k^2 \hsz^4)\cgo\dgo\, \ty+(1-k^2\sgo^2 \hsz^2)\hcz\,\hdz\, x},
\end{align}
\end{subequations}
where $k$ is the modulus of the elliptic sine,
\begin{subequations}
\begin{align}
 &\sz=\sn{z_0},\quad
 \hsz=\sn{z_0+\gae+\gao},
 &&\sge=\sn{\gae},\quad
 \sgo=\sn{\gao},\\
 &\cz=\cn{z_0},\quad
 \hcz=\cn{z_0+\gae+\gao},
 &&\cge=\cn{\gae},\quad
 \cgo=\cn{\gao},\\
 &\dz=\dn{z_0},\quad
 \hdz=\dn{z_0+\gae+\gao},
 &&\dge=\dn{\gae},\quad
 \dgo=\dn{\gao},
\end{align}
\end{subequations}
and 
\begin{equation}
 \tilde{}\,:(\gae,\gao,z_0)\mapsto \big(\gae,\gao,z_0+2(\gae+\gao)\big).
\end{equation}
See Appendix \ref{section:Appendix_A} for standard results about Jacobian elliptic functions.

In \cite{AHJN2016:MR3509963}, the geometry of Equation \eqref{eqns:RCGeqn}, i.e., its space of initial values and corresponding Cremona isometries, was investigated. 
The space of initial values of Equation \eqref{eqns:RCGeqn} was identified with the elliptic $A_0^{(1)}$-surface and its Cremona isometries collectively form an affine Weyl transformation group of type $E_8^{(1)}$, denoted by $\WEe$.
Moreover, it was shown that Equation \eqref{eqns:RCGeqn} cannot be derived from a translation of $\WEe$ but can be derived by using a projective reduction.
The process of deriving discrete dynamical systems of Painlev\'e type from elements of affine Weyl groups that are of infinite order (but that are not necessarily translations) by taking a projection on an appropriate subspace of parameters is called a projective reduction \cite{KNT2011:MR2773334,KN2015:MR3340349}.
Note that this process is motivated by taking symmetric versions of systems of discrete Painlev\'e equations, which has been known since \cite{RGH1991:MR1125951}.

Although the geometry of Equation \eqref{eqns:RCGeqn} has been clarified, 
the realization of Equation \eqref{eqns:RCGeqn} from the action of Cremona isometries was missing
since its base points are parametrized by the Jacobian elliptic function (Jacobi's setting)
and birational actions of Cremona isometries on such setting were not explicitly known. 

The present study fills this gap (see Theorem \ref{theorem:birational_tWE8}), that is, we provide the realization of the Equation \eqref{eqns:RCGeqn}
as a half-translation of the extended affine Weyl group $\tWEe$.

In the remainder of the paper, we refer to Sakai's elliptic Painlev\'e equation \cite{SakaiH2001:MR1882403}
as the MSY elliptic Painlev\'e equation \cite{MSY2003:MR1958273} because the former was obtained in $\bbP^2$
while the latter was provided by resolution in $\bbP^1\times\bbP^1$.
We work throughout the paper in $\bbP^1\times\bbP^1$.

%%%%%%%%%%%%%%%%%%%%%%%%%%%%%%%%%%%%%%%%%%%%%%%%%%%%%%%%%%%
%% 1.2 Plan of the paper
%%%%%%%%%%%%%%%%%%%%%%%%%%%%%%%%%%%%%%%%%%%%%%%%%%%%%%%%%%%
\subsection{Plan of the paper}
This paper is organized as follows. 
In \S \ref{section:cremona}, we construct Cremona isometries for base points \eqref{eqn:basepoints} that generalize those of Equation \eqref{eqns:RCGeqn} and show that these form an affine Weyl group of type $\EE8^{(1)}$, denoted by $\WEe$, under the linear actions on the Picard group. 
In \S \ref{section:birational}, using intersection theory, we obtain the birational action of $\WEe$ on the coordinates and parameters of the base points, and we prove that these birational actions also satisfy the fundamental relations of an affine Weyl group of type $\EE8^{(1)}$.
Adding the identity mappings on the Picard group, we obtain the extended affine Weyl group $\tWEe$. 
Equation \eqref{eqns:RCGeqn} and three other elliptic Painlev\'e equations are then deduced from the action of this resulting group $\tWEe$. 
Finally, we give some concluding remarks in \S \ref{ConcludingRemarks}.
%%%%%%%%%%%%%%%%%%%%%%%%%%%%%%%%%%%%%%%%%%%%%%%%%%%%%%%%%%%
%% 2. Cremona isometries
%%%%%%%%%%%%%%%%%%%%%%%%%%%%%%%%%%%%%%%%%%%%%%%%%%%%%%%%%%%
\section{Cremona isometries}
\label{section:cremona}

In this section, we construct Cremona isometries for a generalization of the base points of Equation \eqref{eqns:RCGeqn} and show that these isometries collectively form an affine Weyl group of type $\EE8^{(1)}$.

Recall that Equation \eqref{eqns:RCGeqn} has the following eight base points (see \cite{AHJN2016:MR3509963}):
\begin{subequations}\label{eqn:basepoints_RCG}
\begin{align}
 &p_1:(x,y)=\big(\cd{\gao+\ka},\cd{z_0-\gae-\gao+\ka}\big),\\
 &p_2:(x,y)=\big(\cd{\gao+\ii K'},\cd{z_0-\gae-\gao+\ii K'}\big),\\
 &p_3:(x,y)=\big(\cd{\gao+2K},\cd{z_0-\gae-\gao+2K}\big),\\
 &p_4:(x,y)=\big(\cd{\gao},\cd{z_0-\gae-\gao}\big),\\
 &p_5:(x,y)=\big(\cd{z_0+\ka},\cd{\gae+\ka}\big),\\
 &p_6:(x,y)=\big(\cd{z_0+\ii K'},\cd{\gae+\ii K'}\big),\\
 &p_7:(x,y)=\big(\cd{z_0+2K},\cd{\gae+2K}\big),\\
 &p_8:(x,y)=\big(\cd{z_0},\cd{\gae}\big),
\end{align}
\end{subequations}
where $K=K(k)$ and $K'=K'(k)$ are complete elliptic integrals and
\begin{equation}\label{eqn:kappa}
 \ka=2K+\ii K',
\end{equation}
which lie on the elliptic curve
\begin{equation}\label{eqn:RCG_curve}
 \sn{z_0-\gae}^2(1+k^2x^2y^2)+2\cn{z_0-\gae}\dn{z_0-\gae}xy-(x^2+y^2)=0.
\end{equation}
Note that these base points contain three arbitrary parameters $z_0$, $\gae$, $\gao$.

We now generalize the number of arbitrary parameters. 
Take a new set of base points given by
\begin{equation}\label{eqn:basepoints}
 p_i:(x,y)=\big(\cd{c_i+\eta},\cd{\eta-c_i}\big),\quad i=1,\dots,8,
\end{equation}
where $c_i$, $i=1,\dots,8$, and $\eta$ are non-zero complex parameters. 
These generalized base points lie on the elliptic curve
\begin{equation}\label{eqn:Jelliptic_curve}
 \sn{2\eta}^2(1+k^2x^2y^2)+2\cn{2\eta}\dn{2\eta}xy-(x^2+y^2)=0.
\end{equation}
These base points can be reduced to the ones given in Equation \eqref{eqn:basepoints_RCG} under a specialization of the parameters.
Indeed, the points \eqref{eqn:basepoints} and elliptic curve \eqref{eqn:Jelliptic_curve}
can be respectively reduced to the points \eqref{eqn:basepoints_RCG} and the curve \eqref{eqn:RCG_curve} by assuming
\begin{subequations}\label{eqns:condition_RCG}
\begin{align} 
 &\cc2=\cc1+2K,\quad
 \cc3=\cc1+\ii K',\quad
 \cc4=\cc1+\ka,\\
 &\cc6=\cc5+2K,\quad
 \cc7=\cc5+\ii K',\quad
 \cc8=\cc5+\ka,
\end{align}
and letting
\begin{equation}
 z_0=\eta+\cc5+\ka,\quad
 \gae=\cc5-\eta+\ka,\quad
 \gao=\eta+\cc1+\ka.
\end{equation}
\end{subequations}

%%%%%%%%%%%%%%%%%%%%%%%%%%%%%%
%% Remark
%%%%%%%%%%%%%%%%%%%%%%%%%%%%%%
\begin{remark}
To distinguish between Weierstrass's and Jacobi's setting,
we here denote the surface characterized by the elliptic curve \eqref{eqn:Jelliptic_curve} as $A_0^{(1)J}$-surface.
The relation between $A_0^{(1)}$- and $A_0^{(1)J}$-surface is given in Appendix \ref{section:Appendix_B}.
\end{remark}
%%%%%%%%%%%%%%%%%%%%%%%%%%%%%%
%%%%%%%%%%%%%%%%%%%%%%%%%%%%%%
Let $\ep: X \to \bbP^1\times\bbP^1$ denote the blow up of $\bbP^1\times\bbP^1$ at the points \eqref{eqn:basepoints}. Moreover, let the linear equivalence classes of the total transform of vertical and horizontal lines in $\bbP^1\times\bbP^1$ be denoted respectively by $H_0$ and $H_1$.

The Picard group of $X$, denoted by Pic$(X)$, is given by
\begin{equation}
 \PicX=\bbZ H_0\bigoplus\bbZ H_1\bigoplus_{i=1}^8\bbZ \EE{i},
\end{equation}
where $\EE{i}=\ep^{-1}(p_i)$, $i=1,\dots,8$, are exceptional divisors. 
The intersection form $(\,|\,)$ is given by a symmetric bilinear form with 
\begin{equation}
 (H_i|H_j)=1-\de_{ij},\quad
 (H_i|\EE{j})=0,\quad
 (\EE{i}|\EE{j})=-\de_{ij},
\end{equation}
where $\de_{ij}$ is the Kronecker delta.
The anti-canonical divisor of $X$ is given by 
\begin{equation}
 -K_X=2H_0+2H_1-\sum_{i=1}^8\EE{i}.
\end{equation}
For later convenience, let
\begin{equation}
 \delta=-K_X.
\end{equation}

We define the root lattice $Q(A_0^{(1)\bot})=\bigoplus_{i=0}^8\bbZ\alpha_i$ by the elements of Pic$(X)$ 
that are orthogonal to the anti-canonical divisor $\delta$. 
The simple roots $\alpha_i$, $i=0,\dots,8$, are given by
\begin{subequations}\label{eqns:simpleroots_alpha}
\begin{align}
 &\alpha_1=H_1-H_0,\quad
 \alpha_2=H_0-\EE1-\EE2,\quad
 \alpha_i=\EE{i-1}-\EE{i},\quad i=3,\dots,7,\\
 &\alpha_8=\EE1-\EE2,\quad
 \alpha_0=\EE7-\EE8,
\end{align}
\end{subequations}
where
\begin{equation}
 \delta=2\alpha_1+4\alpha_2+6\alpha_3+5\alpha_4+4\alpha_5+3\alpha_6+2\alpha_7+3\alpha_8+\alpha_0.
\end{equation}
We can easily verify that
\begin{equation}
 (\alpha_i|\alpha_j)
 =\begin{cases}
 -2,& i=j\\
 \ 1, &i=j-1\quad (j=2,\dots,7),\quad \text{or\hspace{0.5em}if}\quad (i,j)=(3,8),(7,0)\\
 \ 0, &\text{otherwise}.
\end{cases}
\end{equation}
Representing intersecting $\alpha_i$ and $\alpha_j$ by a line between nodes $i$ and $j$, we obtain the Dynkin diagram of $E_8^{(1)}$ shown in Figure \ref{fig:dynkin_E8}.

%%%%%%%%%%%%%%%%%%%%%%%%%%%%%%
%% Figure 
%%%%%%%%%%%%%%%%%%%%%%%%%%%%%%
\begin{figure}[htbp]
\begin{center}
\includegraphics[width=0.5\textwidth]{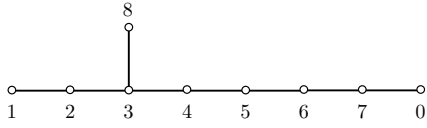}
\caption{Dynkin diagram for the root lattice $\bigoplus_{i=0}^8\mathbb{Z}\alpha_i$.}
\label{fig:dynkin_E8}
\end{center}
\end{figure}
%%%%%%%%%%%%%%%%%%%%%%%%%%%%%%
%%%%%%%%%%%%%%%%%%%%%%%%%%%%%%

%%%%%%%%%%%%%%%%%%%%%%%%%%%%%%
%% Definition
%%%%%%%%%%%%%%%%%%%%%%%%%%%%%%
\begin{definition}{\rm \cite{DolgachevIV2012:MR2964027,DO1988:MR1007155,LooijengaE1981:MR632841}}\label{def:cremona}
An automorphism of Pic$(X)$ is called a Cremona isometry if it preserves 
\begin{description}
\item[(i)]
the intersection form $(\,|\,)$ on Pic$(X)$;
\item[(ii)]
the canonical divisor $K_X$;
\item[(iii)]
effectiveness of each effective divisor of Pic$(X)$.
\end{description}
\end{definition}
%%%%%%%%%%%%%%%%%%%%%%%%%%%%%%
%%%%%%%%%%%%%%%%%%%%%%%%%%%%%%

It is well-known that the reflections are Cremona isometries.
In this case we define the reflections $s_i$, $i=0,\dots,8$, by the following linear actions:
\begin{equation}\label{eqn:def_si}
 s_i.v=v-\cfrac{2(v|\al_i)}{(\al_i|\al_i)}\,\al_i,
\end{equation}
for all $v\in \PicX$.

They collectively form an affine Weyl group of type $\EE8^{(1)}$, denoted by $\WEe$. 
Namely, we can easily verify that under the action on the $\PicX$ the following fundamental relations hold:
\begin{equation}\label{eqns:fundamental_We8}
 (s_is_j)^{l_{ij}}=1,
\end{equation}
where 
\begin{equation}
l_{ij}=
\begin{cases}
 1,& i=j\\
 3, &i=j-1\quad (j=2,\dots,7),\quad \text{or\hspace{0.5em}if}\quad (i,j)=(3,8),(7,0)\\
 2, &\text{otherwise}.
\end{cases}
\end{equation}

%%%%%%%%%%%%%%%%%%%%%%%%%%%%%%%%%%%%%%%%%%%%%%%%%%%%%%%%%%%
%% 3. Birational actions of the Cremona isometries for the Jacobi's setting
%%%%%%%%%%%%%%%%%%%%%%%%%%%%%%%%%%%%%%%%%%%%%%%%%%%%%%%%%%%
\section{Birational actions of the Cremona isometries for the Jacobi's setting}
\label{section:birational}
In this section, we give the birational actions of the Cremona isometries on the coordinates and parameters of the base points \eqref{eqn:basepoints}.
By using these birational actions, we derive various elliptic Painlev\'e equations. 
%We denote affine coordinates in $\bbP^1\times\bbP^1$ by $(x, y)$.

We focus on a particular example first to explain how to deduce such birational actions. 
Recall $H_0$ and $H_1$ are given by the linear equivalence classes of vertical lines $x=\text{constant}$ and horizontal lines $y=\text{constant}$, respectively. 
Applying the reflection operator $s_2$ given by \eqref{eqn:def_si} to $H_1$, we find
\begin{equation}
 s_2.H_1=H_0+H_1-E_1-E_2,
\end{equation}
which means that $s_2(y)$ can be described by the curve of bi-degree $(1,1)$ passing through base points $p_1$ and $p_2$ with multiplicity $1$.
 (See \cite{KNY2015:arXiv150908186K} for for more detail.)
This result leads us to the birational action given below in Equation \eqref{eqn:action_E8_J_y}.

Similarly, from the linear actions of $s_i$, $i=0,\dots,8$, we obtain their birational actions on the coordinates and parameters of the base points \eqref{eqn:basepoints} as follows.
The actions of the generators of $\WEe$ on the coordinates $(x,y)$ are given by
\begin{subequations}\label{eqns:action_E8_J_para_xy}
\begin{align}
 &s_1(x)=y,\quad
 s_1(y)=x,\\
 &\bfrac{s_2(y)-\cd{2\eta-\frac{\cc1-\cc2}{2}}}{s_2(y)-\cd{2\eta+\frac{\cc1-\cc2}{2}}}
 \bfrac{x-\cd{\eta+\cc1}}{x-\cd{\eta+\cc2}}
 \bfrac{y-\cd{\eta-\cc2}}{y-\cd{\eta-\cc1}}\notag\\
 &\qquad =\bfrac{1-\frac{\cd{\eta-\cc2}}{\cd{\eta}}}{1-\frac{\cd{\eta-\cc1}}{\cd{\eta}}}
 \bfrac{1-\frac{\cd{\eta+\cc1}}{\cd{\eta}}}{1-\frac{\cd{\eta+\cc2}}{\cd{\eta}}}
 \bfrac{1-\frac{\cd{2\eta-\frac{\cc1-\cc2}{2}}}{\cd{\frac{\cc1+\cc2}{2}}}}{1-\frac{\cd{2\eta+\frac{\cc1-\cc2}{2}}}{\cd{\frac{\cc1+\cc2}{2}}}},
 \label{eqn:action_E8_J_y}
\end{align}
while those on the parameters $\cc{i}$, $i=1,\dots,8$, and $\eta$ are given by
\begin{align}
 &s_0(\cc7)=\cc8,\quad
 s_0(\cc8)=\cc7,\quad
 s_1(\eta)=-\eta,\\
 &s_2(\eta)=\eta-\frac{2\eta+\cc1+\cc2}{4},\quad
 s_2(\cc{i})=
 \begin{cases}
 \cc{i}-\dfrac{3(2\eta+\cc1+\cc2)}{4},& i=1,2\\
 \cc{i}+\dfrac{2\eta+\cc1+\cc2}{4},& i\neq 1,2,
 \end{cases}\\
 &s_k(\cc{k-1})=\cc{k},\quad 
 s_k(\cc{k})=\cc{k-1},\quad
 k=3,\dots,7,\\
 &s_8(\cc1)=\cc2,\quad
 s_8(\cc2)=\cc1.
\end{align}
\end{subequations}
Note that 
\begin{equation}
 \la=\sum_{i=1}^8\cc{i}
\end{equation}
is invariant under the action of $\WEe$.

For Jacobi's elliptic function $\cd{u}$ it is well known that shifts by half periods give the following relations:
\begin{equation}
 \cd{u+2K}=-\cd{u},\quad
 \cd{u+\ii K'}=\dfrac{1}{k\, \cd{u}}.
\end{equation}
These identities motivate our search for the transformations that are identity mappings on the $\PicX$.
Indeed, we define such transformations $\io_i$, $i=1,\dots,4$, by the following actions:
\begin{subequations}\label{eqns:action_iota}
\begin{align}
 &\io_1:(\cc{1},\dots,\cc{8},\eta,x,y)
 \mapsto\left(\cc{1}-\frac{\ii K'}{2},\dots,\cc{8}-\frac{\ii K'}{2},\eta-\frac{\ii K'}{2},\frac{1}{kx},y\right),\\
 &\io_2:(\cc{1},\dots,\cc{8},\eta,x,y)
 \mapsto\left(\cc{1}-\frac{\ii K'}{2},\dots,\cc{8}-\frac{\ii K'}{2},\eta+\frac{\ii K'}{2},x,\frac{1}{ky}\right),\\
 &\io_3:(\cc{1},\dots,\cc{8},\eta,x,y)
 \mapsto\left(\cc{1}-K,\dots,\cc{8}-K,\eta-K,-x,y\right),\\
 &\io_4:(\cc{1},\dots,\cc{8},\eta,x,y)
 \mapsto\left(\cc{1}-K,\dots,\cc{8}-K,\eta+K,x,-y\right).
\end{align}
\end{subequations}
Adding the transformations $\io_i$, we extend $\WEe$ to
\begin{equation}
 \tWEe=\lrangle{\io_1,\io_2,\io_3,\io_4}\rtimes\WEe.
\end{equation}
In general, for a function $F=F(c_i,\eta,x,y)$, we let an element
$w\in\tWEe$ act as $w.F=F(w.c_i,w.\eta,w.x,w.y)$, that is, 
$w$ acts on the arguments from the left. 
%この場合，<ι1,...,ι4>が正規部分群であることに注意
%%%%%%%%%%%%%%%%%%%%%%%%%%%%%%
%% Theorem
%%%%%%%%%%%%%%%%%%%%%%%%%%%%%%
\begin{theorem}\label{theorem:birational_tWE8}
Under the birational actions \eqref{eqns:action_E8_J_para_xy} and \eqref{eqns:action_iota}, 
the generators of $\tWEe=\lrangle{\io_1,\io_2,\io_3,\io_4}\rtimes\lrangle{s_0,\dots,s_8}$ satisfy the following fundamental relations:
\begin{subequations}\label{eqns:fundamental_tWe8}
\begin{align}
 &(s_is_j)^{l_{ij}}=(\io_i\io_j)^{m_{ij}}=1,\\
 &\io_is_j=s_j\io_i,\quad i=1,2,3,4,~ j\neq 1,2,\quad
 \io_{\{1,2,3,4\}}s_1=s_1\io_{\{2,1,4,3\}},\\
 &\io_1s_2=s_2\io_1\io_2,\quad
 \io_2s_2=s_2\io_2,\quad
 \io_3s_2=s_2\io_3\io_4,\quad
 \io_4s_2=s_2\io_4,
\end{align}
\end{subequations}
where 
\begin{align}
&l_{ij}=
\begin{cases}
 1,& i=j\\
 3, &i=j-1\quad (j=2,\dots,7),\quad \text{or\hspace{0.5em}if}\quad (i,j)=(3,8),(7,0)\\
 2, &\text{otherwise},
\end{cases}\\
&m_{ij}=
\begin{cases}
 1,& i=j\\
 2, &\text{otherwise}.
\end{cases}
\end{align}
\end{theorem}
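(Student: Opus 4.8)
The plan is to verify the relations in \eqref{eqns:fundamental_tWe8} by separating them according to the nature of the actions involved: purely linear relations on $\PicX$, purely parameter-level relations, and the mixed relations tying the $\io_i$ to $s_2$. The relations $(s_is_j)^{l_{ij}}=1$ among the $s_i$ are already established in \eqref{eqns:fundamental_We8} at the level of linear actions on $\PicX$; since the birational actions \eqref{eqns:action_E8_J_para_xy} were \emph{constructed} from those linear actions via intersection theory (as illustrated for $s_2.H_1$), what must be checked is that passing to the birational realization does not destroy these relations. I would do this by a direct computation on the generators $(\cc1,\dots,\cc8,\eta)$ and on $(x,y)$: for each pair $(i,j)$ with $l_{ij}=2$ or $3$, compose the explicit formulas and simplify using the addition formulas for $\cd{\,\cdot\,}$ collected in Appendix \ref{section:Appendix_A}. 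The involutions $s_0,s_1,s_8$ and the transpositions $s_k$ ($k=3,\dots,7$) act essentially by permuting the $\cc i$ and (for $s_1$) by $\eta\mapsto-\eta$, $x\leftrightarrow y$, so their relations among themselves and with $s_2$ on the parameter side are immediate; the only genuine computation is the order-3 relation $(s_2s_3)^3=1$ and the order-2 relations $(s_2s_k)^2=1$ for $k\neq3$, together with the braid relation involving \eqref{eqn:action_E8_J_y} for the coordinate $y$.

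Next I would treat the relations $(\io_i\io_j)^{m_{ij}}=1$. Each $\io_i$ shifts every $\cc k$ and $\eta$ by a fixed half-period ($-\tfrac{\ii K'}{2}$ or $-K$) and acts on the coordinates by one of $x\mapsto\tfrac1{kx}$, $y\mapsto\tfrac1{ky}$, $x\mapsto-x$, $y\mapsto-y$. That $\io_i^2=1$ follows because $2\cdot(-\tfrac{\ii K'}{2})=-\ii K'$ and $2\cdot(-K)=-2K$ are full/half periods under which $\cd{\,\cdot\,}$ is (anti)periodic in the way recorded before \eqref{eqns:action_iota}, and because each coordinate map is an involution ($k\cdot\tfrac1{k}\cdot\tfrac1{(1/(kx))}=x$, etc.); here one must be careful that the shift in $\eta$ and the shifts in the $\cc i$ conspire so that the base points \eqref{eqn:basepoints} return to themselves, which is exactly why the $\io_i$ are identity on $\PicX$. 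Commutativity $(\io_i\io_j)^2=1$ for $i\neq j$ is clear on parameters (the shifts add) and on coordinates ($x$- and $y$-operations commute, and $x\mapsto-x$ commutes with $x\mapsto\tfrac1{kx}$ since $-\tfrac1{kx}=\tfrac1{k(-x)}$).

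The substantive part is the mixed block: $\io_is_j=s_j\io_i$ for $j\neq1,2$, the twisted relations $\io_{\{1,2,3,4\}}s_1=s_1\io_{\{2,1,4,3\}}$, and especially $\io_1s_2=s_2\io_1\io_2$, $\io_3s_2=s_2\io_3\io_4$ (with $\io_2s_2=s_2\io_2$, $\io_4s_2=s_2\io_4$). The commutations with $s_j$ for $j\in\{0,3,\dots,7,8\}$ are routine since those $s_j$ only permute the $\cc i$ and fix $\eta,x,y$, while the $\io_i$ shift all $\cc i$ uniformly. The $s_1$ relations hold because $s_1$ swaps $x\leftrightarrow y$ and sends $\eta\mapsto-\eta$: conjugating $\io_1$ (which acts on $x$ and shifts $\eta$ by $-\tfrac{\ii K'}2$) by this swap-and-negate produces $\io_2$ (acting on $y$, shifting $\eta$ by $+\tfrac{\ii K'}2$), and similarly $\io_3\leftrightarrow\io_4$; one checks the $\cc i$-shifts match because $s_1$ fixes the $\cc i$. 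I expect the hard step to be $\io_1s_2=s_2\io_1\io_2$ (and its companion $\io_3s_2=s_2\io_3\io_4$): here $s_2$ acts non-trivially and non-linearly on $\eta$ and on the $\cc i$ via the formulas in \eqref{eqns:action_E8_J_para_xy}, and on $y$ through the implicit equation \eqref{eqn:action_E8_J_y}, so one must carefully track how the half-period shift of $\io_1$ in $\cc1,\cc2,\eta$ feeds through $s_2$'s affine-linear parameter action and verify that the resulting coordinate action on $y$ forces the extra factor $\io_2$. I would verify the parameter part by a short explicit linear computation on $(\eta,\cc1,\cc2)$ — since $s_2$ is affine-linear in the parameters and $\io_1,\io_2$ are translations, this reduces to an identity among translation vectors — and then confirm the coordinate part by substituting into \eqref{eqns:action_E8_J_para_xy} and using the half-period identities $\cd{u+\ii K'}=\tfrac1{k\,\cd u}$, $\cd{u+2K}=-\cd u$ to match both sides. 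The remaining relations $\io_2s_2=s_2\io_2$ and $\io_4s_2=s_2\io_4$ are the degenerate cases where $\io_i$ acts on $y$ rather than $x$ and the $s_2$-coordinate formula \eqref{eqn:action_E8_J_y} is left-covariant under that action, so no correction term appears.
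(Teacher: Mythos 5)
Your overall route matches the paper's: check the relations generator by generator, observe that the permutation-type reflections and the $\io_i$ are routine, and isolate the braid relation $(s_2s_3)^3=1$ involving the implicitly defined action \eqref{eqn:action_E8_J_y} as the only substantive computation. (The paper indeed dismisses everything else, including the mixed $\io$--$s_2$ relations you flag as hard, with ``directly verified''.) The gap is precisely at the crux you identify: ``compose the explicit formulas and simplify using the addition formulas'' is not yet an argument, because $s_2(y)$ is defined only implicitly by a relation whose right-hand side is a particular constant in $\cc1,\cc2,\eta$, and to verify $s_2s_3s_2=s_3s_2s_3$ you must control how that constant interacts with the swap $\cc2\leftrightarrow\cc3$ and with a second application of $s_2$. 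A head-on simplification does not visibly close up.

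The paper's proof supplies exactly the missing device. It introduces
\begin{equation*}
 g(z)
 =\bfrac{1-\frac{\cd{\eta-\cc2}}{\cd{\eta-z}}}{1-\frac{\cd{\eta-\cc1}}{\cd{\eta-z}}}
 \bfrac{1-\frac{\cd{\eta+\cc1}}{\cd{\eta+z}}}{1-\frac{\cd{\eta+\cc2}}{\cd{\eta+z}}}
 \bfrac{1-\frac{\cd{2\eta-\frac{\cc1-\cc2}{2}}}{\cd{z+\frac{\cc1+\cc2}{2}}}}{1-\frac{\cd{2\eta+\frac{\cc1-\cc2}{2}}}{\cd{z+\frac{\cc1+\cc2}{2}}}},
\end{equation*}
whose value at $z=0$ is the right-hand side of \eqref{eqn:action_E8_J_y}, and shows by converting to Jacobi theta functions via \eqref{eqn:Jelliptic_JTheta} and applying the addition formula \eqref{eqn:add_JTH} that $g(z)$ collapses to a ratio of $\Theta_1$-values independent of $z$, so $g(z)=g(0)$. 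The defining relation for $s_2(y)$ can therefore be rewritten with $g(\cc3)$ on the right-hand side, and in that form its compatibility with $s_3$ (which exchanges $\cc2$ and $\cc3$) — hence the relation $(s_2s_3)^3=1$ — becomes transparent. You would need to discover this $z$-independence (or an equivalent reformulation) to carry your plan through; without it the key step of your proposal would stall.
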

%%%%%%%%%%%%%%%%%%%%%%%%%%%%%%
%% proof
%%%%%%%%%%%%%%%%%%%%%%%%%%%%%%
\begin{proof}
Let us define the function $g(z)$ by
\begin{equation}
 g(z)
 =\bfrac{1-\frac{\cd{\eta-\cc2}}{\cd{\eta-z}}}{1-\frac{\cd{\eta-\cc1}}{\cd{\eta-z}}}
 \bfrac{1-\frac{\cd{\eta+\cc1}}{\cd{\eta+z}}}{1-\frac{\cd{\eta+\cc2}}{\cd{\eta+z}}}
 \bfrac{1-\frac{\cd{2\eta-\frac{\cc1-\cc2}{2}}}{\cd{z+\frac{\cc1+\cc2}{2}}}}{1-\frac{\cd{2\eta+\frac{\cc1-\cc2}{2}}}{\cd{z+\frac{\cc1+\cc2}{2}}}}.
\end{equation}
The action \eqref{eqn:action_E8_J_y} can be expressed as
\begin{equation}
 \bfrac{s_2(y)-\cd{2\eta-\frac{\cc1-\cc2}{2}}}{s_2(y)-\cd{2\eta+\frac{\cc1-\cc2}{2}}}
 \bfrac{x-\cd{\eta+\cc1}}{x-\cd{\eta+\cc2}}
 \bfrac{y-\cd{\eta-\cc2}}{y-\cd{\eta-\cc1}}=g(0).
\end{equation}
By replacing the Jacobian elliptic function to the Jacobi theta-functions using \eqref{eqn:Jelliptic_JTheta}
and then applying the addition formula \eqref{eqn:add_JTH}, the function $g(z)$ can be rewritten as
\begin{equation}
 g(z)=\frac{\JTK{\eta-\cc1}\JTK{\eta+\cc2}\JTK{2\eta+\frac{\cc1-\cc2}{2}}}{\JTK{\eta-\cc2}\JTK{\eta+\cc1}\JTK{2\eta-\frac{\cc1-\cc2}{2}}},
\end{equation}
which gives
\begin{equation}
 g(z)=g(0).
\end{equation}
Therefore, the action \eqref{eqn:action_E8_J_y} can be also expressed as
\begin{equation}
 \bfrac{s_2(y)-\cd{2\eta-\frac{\cc1-\cc2}{2}}}{s_2(y)-\cd{2\eta+\frac{\cc1-\cc2}{2}}}
 \bfrac{x-\cd{\eta+\cc1}}{x-\cd{\eta+\cc2}}
 \bfrac{y-\cd{\eta-\cc2}}{y-\cd{\eta-\cc1}}=g(c_3),
\end{equation}
which leads the proof of the relation $(s_2s_3)^3=1$.
The other relations can be directly verified by using the actions \eqref{eqns:action_E8_J_para_xy} and \eqref{eqns:action_iota}.
Therefore, we have completed the proof.
\end{proof}
%%%%%%%%%%%%%%%%%%%%%%%%%%%%%%
%%%%%%%%%%%%%%%%%%%%%%%%%%%%%%

Now we are in a position to derive Equation \eqref{eqns:RCGeqn} from the Cremona transformations
associated with $A_0^{(1)J}$-surface.
Note that for convenience we use the following notations 
for the composition of the reflections $s_i$ and for the summation of the parameters $\cc{i}$:
\begin{align}
 &s_{i_1\cdots i_m}=s_{i_1}\dots s_{i_m},\quad i_1\cdots i_m\in\{0,\dots,8\},\\
 &\cc{j_1\cdots j_n}=\cc{j_1}+\cdots+\cc{j_n},\quad j_1\cdots j_n\in\{1,\dots,8\},
\end{align}
respectively.
Let
\begin{equation}
 \RJ1=s_{56453483706756452348321 56453483706756452348321706734830468}\io_4\io_3\io_2\io_1.
\end{equation}
The actions of $\RJ1$ on the root lattice $Q(A_0^{(1)\bot})$ and parameter space are not translational as the following shows:
\begin{align}
 &\RJ1:
 \begin{pmatrix}
 \al_0\\\al_1\\\al_2\\\al_3\\\al_4\\\al_5\\\al_6\\\al_7\\\al_8
 \end{pmatrix}
 \mapsto
\left(\begin{array}{ccccccccc}
 -1 & 0 & 0 & 0 & 0 & 0 & 0 & 0 & 0 \\
 -1 & -1 & -4 & -6 & -5 & -4 & -3 & -2 & -3 \\
 0 & 0 & 1 & 2 & 1 & 0 & 0 & 0 & 1 \\
 0 & 0 & 0 & -1 & 0 & 0 & 0 & 0 & 0 \\
 0 & 0 & 0 & 0 & -1 & 0 & 0 & 0 & 0 \\
 1 & 1 & 2 & 4 & 4 & 3 & 3 & 2 & 2 \\
 0 & 0 & 0 & 0 & 0 & 0 & -1 & 0 & 0 \\
 0 & 0 & 0 & 0 & 0 & 0 & 0 & -1 & 0 \\
 0 & 0 & 0 & 0 & 0 & 0 & 0 & 0 & -1 \\
\end{array}\right)
 \begin{pmatrix}
 \al_0\\\al_1\\\al_2\\\al_3\\\al_4\\\al_5\\\al_6\\\al_7\\\al_8
 \end{pmatrix},\\
 &\RJ1(\cc{i})=-\cc{i}+\frac{\cc{1234}-\cc{5678}}{4}-\ka,\quad i=1,\dots,4,\\
 &\RJ1(\cc{j})=-\cc{j}+\frac{\cc{1234}+3\cc{5678}}{4}-\ka,\quad j=5,\dots,8,\quad
 \RJ1(\eta)=\eta+\frac{\la}{2},
\end{align}
where $\ka$ is defined by \eqref{eqn:kappa}.
However, when the parameters take special values \eqref{eqns:condition_RCG},
the action of $\RJ1$ becomes the translational motion in the parameter subspace:
\begin{equation}\label{eqn:action_RJ1_special_para}
 \RJ1:(\gae,\gao,z_0)\mapsto(\gae,\gao,z_0+2(\gae+\gao)-2\ka),
\end{equation}
and then the action on the coordinates:
\begin{equation}
 \RJ1:(x,y)\mapsto(\tx,\ty),
\end{equation}
gives Equation \eqref{eqns:RCGeqn}.
Here, we also consider the translation 
\begin{equation}\label{eqn:def_TJ1}
 \TJ1={\RJ1}^2
\end{equation}
whose actions on the root lattice $Q(A_0^{(1)\bot})$ and parameter space are given by
\begin{align}
 &\TJ1(\al_1)=\al_1-2\de,\quad
 \TJ1(\al_5)=\al_5+\de,\\
 &\TJ1:(\cc{i},\cc{i+4},\eta)\mapsto (\cc{i}-\la,\cc{i+4}+\la+4\ka,\eta+\la-2\ka),\quad i=1,\dots,4.
 \label{eqn:action_TJ1_para}
\end{align}
The action on the coordinates:
\begin{equation}
 \TJ1:(x,y)\mapsto(\ox,\oy),
\end{equation}
gives the following elliptic Painlev\'e equation:
\begin{subequations}\label{eqns:TJ1}
\begin{align}
 &\bfrac{k\,\cd{\eta-\cc8+\ka}\oy+1}{k\,\cd{\eta-\cc7+\ka}\oy+1}
 \bfrac{\tx-\cd{\eta-\cc7+\frac{\cc{5678}}{2}+\la+\ka}}{\tx-\cd{\eta-\cc8+\frac{\cc{5678}}{2}+\la+\ka}}\notag\\
 &\qquad =\gG{\frac{\cc{5678}-2\cc5+\la}{2},\frac{\cc{5678}-2\cc6+\la}{2},\frac{\cc{5678}-2\cc7+\la}{2},\frac{\cc{5678}-2\cc8+\la}{2},\eta+\frac{\la}{2}+\ka}\notag\\
 &\qquad\qquad \times\frac{\pP{\frac{\cc{5678}-2\cc5+\la}{2},\frac{\cc{5678}-2\cc6+\la}{2},\frac{\cc{5678}-2\cc7+\la}{2},\eta+\frac{\la}{2}+\ka}{\tx,\ty}}
 {\pP{\frac{\cc{5678}-2\cc5+\la}{2},\frac{\cc{5678}-2\cc6+\la}{2},\frac{\cc{5678}-2\cc8+\la}{2},\eta+\frac{\la}{2}+\ka}{\tx,\ty}},\\
%%%%
 &\bfrac{k\,\cd{\eta+\cc4+\ka}\ox+1}{k\,\cd{\eta+\cc3+\ka}\ox+1}
 \bfrac{k\,\cd{\eta-\cc3+2\la+\ka}\oy+1}{k\,\cd{\eta-\cc4+2\la+\ka}\oy+1}\notag\\
 &\qquad=\gG{\eta-\cc1+\frac{\cc{1234}}{4}+\la,\eta-\cc2+\frac{\cc{1234}}{4}+\la,\eta-\cc3+\frac{\cc{1234}}{4}+\la,\eta-\cc4+\frac{\cc{1234}}{4}+\la,\frac{\cc{5678}+2\la}{4}+\ka}\notag\\
 &\qquad\qquad \times\frac{\pP{\eta-\cc1+\frac{\cc{1234}}{4}+\la,\eta-\cc2+\frac{\cc{1234}}{4}+\la,\eta-\cc3+\frac{\cc{1234}}{4}+\la,\frac{\cc{5678}+2\la}{4}+\ka}{\frac{-1}{k\oy},\tx}}
 {\pP{\eta-\cc1+\frac{\cc{1234}}{4}+\la,\eta-\cc2+\frac{\cc{1234}}{4}+\la,\eta-\cc4+\frac{\cc{1234}}{4}+\la,\frac{\cc{5678}+2\la}{4}+\ka}{\frac{-1}{k\oy},\tx}},
\end{align}
where $\tx=\RJ1(x)$ and $\ty=\RJ1(y)$ are given by
\begin{align}
 &\bfrac{k\,\cd{\eta+\cc8-\frac{\cc{5678}}{2}}\ty+1}{k\,\cd{\eta+\cc7-\frac{\cc{5678}}{2}}\ty+1}
 \bfrac{x-\cd{\eta+\cc7}}{x-\cd{\eta+\cc8}}
 =\gG{\cc5,\cc6,\cc7,\cc8,\eta}
 \frac{\pP{\cc5,\cc6,\cc7,\eta}{x,y}}{\pP{\cc5,\cc6,\cc8,\eta}{x,y}},\\
%%%%
 &\bfrac{k\,\cd{\eta-\cc4+\frac{\cc{1234}}{2}}\tx+1}{k\,\cd{\eta-\cc3+\frac{\cc{1234}}{2}}\tx+1}
 \bfrac{k\,\cd{\eta+\cc3+\frac{\cc{5678}}{2}}\ty+1}{k\,\cd{\eta+\cc4+\frac{\cc{5678}}{2}}\ty+1}\notag\\
 &\qquad=\gG{\eta+\cc1+\frac{\cc{5678}}{4},\eta+\cc2+\frac{\cc{5678}}{4},\eta+\cc3+\frac{\cc{5678}}{4},\eta+\cc4+\frac{\cc{5678}}{4},\frac{\cc{5678}}{4}}\notag\\
 &\qquad\qquad\times\frac{\pP{\eta+\cc1+\frac{\cc{5678}}{4},\eta+\cc2+\frac{\cc{5678}}{4},\eta+\cc3+\frac{\cc{5678}}{4},\frac{\cc{5678}}{4}}{\frac{-1}{k\ty},x}}
 {\pP{\eta+\cc1+\frac{\cc{5678}}{4},\eta+\cc2+\frac{\cc{5678}}{4},\eta+\cc4+\frac{\cc{5678}}{4},\frac{\cc{5678}}{4}}{\frac{-1}{k\ty},x}}.
\end{align}
\end{subequations}
Here, for conciseness, we provide the terms $\gG{\aaa1,\aaa2,\aaa3,\aaa4,b}$,
$\pQ{\aaa1,\aaa2,\aaa3,\aaa4,\aaa5,b}{X}$ and 
$\pP{\aaa1,\aaa2,\aaa3,b}{X,Y}$ in Appendix \ref{section:Appendix_A}.
%Here, the functions $\gG{\aaa1,\aaa2,\aaa3,\aaa4,b}$ and $\pP{\aaa1,\aaa2,\aaa3,b}{X,Y}$ are defined in Appendix \ref{section:Appendix_A}.

%%%%%%%%%%%%%%%%%%%%%%%%%%%%%%
%% Remark
%%%%%%%%%%%%%%%%%%%%%%%%%%%%%%
\begin{remark}
The periodicity of Jacobian elliptic functions in Equations \eqref{eqns:RCGeqn} and \eqref{eqns:TJ1} allows us to take the actions \eqref{eqn:action_RJ1_special_para} and \eqref{eqn:action_TJ1_para} to be
\begin{align}
 &\RJ1:(\gae,\gao,z_0)\mapsto(\gae,\gao,z_0+2(\gae+\gao)),\\
 &\TJ1:(\cc{i},\cc{i+4},\eta)\mapsto (\cc{i}-\la,\cc{i+4}+\la,\eta+\la),\quad i=1,\dots,4,
\end{align}
respectively, without loss of generality.
\end{remark}
%%%%%%%%%%%%%%%%%%%%%%%%%%%%%%
%% Remark
%%%%%%%%%%%%%%%%%%%%%%%%%%%%%%
\begin{remark}\label{remark:projective_reduction}
The relation \eqref{eqn:def_TJ1} is the key to the reduction of Equation \eqref{eqns:TJ1} to Equation \eqref{eqns:RCGeqn}. 
Under the conditions \eqref{eqns:condition_RCG}, it gives the projective reduction of Equation \eqref{eqns:TJ1} to Equation \eqref{eqns:RCGeqn}.
(See \cite{KNT2011:MR2773334,KN2015:MR3340349} for more detail.) 
We therefore refer to Equation \eqref{eqns:RCGeqn} as the projectively-reduced equation of Equation \eqref{eqns:TJ1}. 
Conversely, we refer to Equation \eqref{eqns:TJ1} as the generic version of Equation \eqref{eqns:RCGeqn}. 
Note that the word ``generic'' here has a specific meaning. 
An equation is called generic only when it has the same number of parameters as the corresponding surface-type discrete Painlev\'e equation in Sakai's list \cite{SakaiH2001:MR1882403}.
\end{remark}
%%%%%%%%%%%%%%%%%%%%%%%%%%%%%%
%% Remark
%%%%%%%%%%%%%%%%%%%%%%%%%%%%%%
\begin{remark}
Often a discrete Painlev\'e equation can be extended up to a generic version by singularity confinement \cite{GRP1991:MR1125950,RGH1991:MR1125951}.
However, Equation \eqref{eqns:RCGeqn} has a reduced number of parameters after applying the singularity confinement criterion (see \cite{RCG2009:MR2525848}).
The reason is some of the coefficients in Equation \eqref{eqns:TJ1} become zero under the condition \eqref{eqns:condition_RCG}.
In such a case, singularity confinement cannot be used to extend the equation up to the generic version.
We note that a similar observation has also been reported for another example in \cite{HHNS2015:MR3317164}.
\end{remark}
%%%%%%%%%%%%%%%%%%%%%%%%%%%%%%
%%%%%%%%%%%%%%%%%%%%%%%%%%%%%%

Next, we provide a new expression for the MSY elliptic Painlev\'e equation investigated in \cite{SakaiH2001:MR1882403,MurataM2004:MR2108677,MSY2003:MR1958273,ORG2002:MR1946932,ORG2001:MR1877472}. 
Moreover, we also construct its projectively-reduced equation.

Define a translation of $\tWEe$ by
\begin{equation}
 \TJ2={\RJ2}^2,
\end{equation}
where
\begin{equation}
 \RJ2=s_{23483256457067348356452348321},
\end{equation}
whose action on the root lattice $Q(A_0^{(1)\bot})$ is given by
\begin{equation}
 \TJ2(\al_1)=\al_1-2\de,\quad
 \TJ2(\al_2)=\al_2+\de.
\end{equation}

The translation $\TJ2$ actually corresponds to the translation ${\rm dP}(A_0^{(1)})$ in \cite{MSY2003:MR1958273}. 
However, it was expressed in terms of Weierstrass elliptic functions. 
We now convert this translation to Jacobian elliptic functions by using the birational actions \eqref{eqns:action_E8_J_para_xy}. 
Consider the following action of $\TJ2$: 
{\allowdisplaybreaks
\begin{subequations}\label{eqns:MSY_dP}
\begin{equation}
 \TJ2:
 \left(\begin{matrix}
 \cc1,\cc2,\cc3,\cc4\\
 \cc5,\cc6,\cc7,\cc8
 \end{matrix},\eta,x,y\right)
 \mapsto
 \left(\begin{matrix}
 \cc1,\cc2,\cc3,\cc4\\
 \cc5,\cc6,\cc7,\cc8
 \end{matrix},\eta+\la,\ox,\oy\right).
\end{equation}
Here, $\ox$ and $\oy$ are given by
\begin{align}
 &\bfrac{\oy-\cd{\eta-\cc8+\la}}{\oy-\cd{\eta-\cc7+\la}}
 \bfrac{x-\cd{\eta+\cc7}}{x-\cd{\eta+\cc8}}\notag\\
 &\quad=\bfrac{1-\frac{\cd{\eta-\cc8+\la}}{\cd{3\eta+\cc6+\la}}}{1-\frac{\cd{\eta-\cc7+\la}}{\cd{3\eta+\cc6+\la}}}
 \bfrac{1-\frac{\cd{2\eta+\frac{\cc7-\cc8}{2}+\la}}{\cd{\cc6+\frac{\cc7+\cc8}{2}-\la}}}
 {1-\frac{\cd{2\eta-\frac{\cc7-\cc8}{2}+\la}}{\cd{\cc6+\frac{\cc7+\cc8}{2}-\la}}}
 \bfrac{1-\frac{\cd{2\eta-\frac{\cc7-\cc8}{2}+\la}}{\cd{2\eta+\cc6-\frac{\cc7+\cc8}{2}+\la}}}
 {1-\frac{\cd{2\eta+\frac{\cc7-\cc8}{2}+\la}}{\cd{2\eta+\cc6-\frac{\cc7+\cc8}{2}+\la}}}
 \bfrac{1-\frac{\cd{\eta+\cc8+\frac{\cc{1234}}{2}}}{\cd{\eta+\cc5+\frac{\cc{1234}}{2}}}}
 {1-\frac{\cd{\eta+\cc7+\frac{\cc{1234}}{2}}}{\cd{\eta+\cc5+\frac{\cc{1234}}{2}}}}\notag\\
 &\qquad\times\bfrac{\frac{\pQ{\cc5,\cc6,\cc3,\cc7,\cc{1234},\eta}{x}}{\pP{\cc1,\cc2,\cc4,\eta}{x,y}}-\frac{x-\cd{\eta+\cc3}}{x-\cd{\eta+\cc4}}\,\frac{\pQ{\cc5,\cc6,\cc4,\cc7,\cc{1234},\eta}{x}}{\gG{\cc1,\cc2,\cc3,\cc4,\eta}\pP{\cc1,\cc2,\cc3,\eta}{x,y}}}
 {\frac{\pQ{\cc5,\cc6,\cc3,\cc8,\cc{1234},\eta}{x}}{\pP{\cc1,\cc2,\cc4,\eta}{x,y}}-\frac{x-\cd{\eta+\cc3}}{x-\cd{\eta+\cc4}}\,\frac{\pQ{\cc5,\cc6,\cc4,\cc8,\cc{1234},\eta}{x}}{\gG{\cc1,\cc2,\cc3,\cc4,\eta}\pP{\cc1,\cc2,\cc3,\eta}{x,y}}},\\
%%%%
 &\bfrac{\ox-\cd{\eta+\cc1+\la}}{\ox-\cd{\eta+\cc2+\la}}
 \bfrac{\oy-\cd{\eta-\cc2+\la}}{\oy-\cd{\eta-\cc1+\la}}\notag\\
 &\quad=\bfrac{1-\frac{\cd{\eta+\cc1+\la}}{\cd{3\eta-\cc3+3\la}}}{1-\frac{\cd{\eta+\cc2+\la}}{\cd{3\eta-\cc3+3\la}}}
 \bfrac{1-\frac{\cd{2\eta+\frac{\cc1-\cc2}{2}+2\la}}{\cd{\cc3+\frac{\cc1+\cc2}{2}}}}
 {1-\frac{\cd{2\eta-\frac{\cc1-\cc2}{2}+2\la}}{\cd{\cc3+\frac{\cc1+\cc2}{2}}}}
 \bfrac{1-\frac{\cd{2\eta-\frac{\cc1-\cc2}{2}+2\la}}{\cd{2\eta-\cc3+\frac{\cc1+\cc2}{2}+2\la}}}
 {1-\frac{\cd{2\eta+\frac{\cc1-\cc2}{2}+2\la}}{\cd{2\eta-\cc3+\frac{\cc1+\cc2}{2}+2\la}}}
 \bfrac{1-\frac{\cd{\eta-\cc1+\frac{\cc{1234}}{2}+\la}}{\cd{\eta-\cc4+\frac{\cc{1234}}{2}+\la}}}
 {1-\frac{\cd{\eta-\cc2+\frac{\cc{1234}}{2}+\la}}{\cd{\eta-\cc4+\frac{\cc{1234}}{2}+\la}}}\notag\\
 &\qquad\times\bfrac{\frac{\pQ{\frac{\la}{2}-\cc4,\frac{\la}{2}-\cc3,\frac{\la}{2}-\cc6,\frac{\la}{2}-\cc2,\cc{1234},\eta+\frac{\la}{2}}{\oy}}{\pP{\frac{\la}{2}-\cc8,\frac{\la}{2}-\cc7,\frac{\la}{2}-\cc5,\eta+\frac{\la}{2}}{\oy,x}}-\frac{\oy-\cd{\eta-\cc6+\la}}{\oy-\cd{\eta-\cc5+\la}}\,\frac{\pQ{\frac{\la}{2}-\cc4,\frac{\la}{2}-\cc3,\frac{\la}{2}-\cc5,\frac{\la}{2}-\cc2,\cc{1234},\eta+\frac{\la}{2}}{\oy}}{\gG{\frac{\la}{2}-\cc8,\frac{\la}{2}-\cc7,\frac{\la}{2}-\cc6,\frac{\la}{2}-\cc5,\eta+\frac{\la}{2}}\pP{\frac{\la}{2}-\cc8,\frac{\la}{2}-\cc7,\frac{\la}{2}-\cc6,\eta+\frac{\la}{2}}{\oy,x}}}
 {\frac{\pQ{\frac{\la}{2}-\cc4,\frac{\la}{2}-\cc3,\frac{\la}{2}-\cc6,\frac{\la}{2}-\cc1,\cc{1234},\eta+\frac{\la}{2}}{\oy}}{\pP{\frac{\la}{2}-\cc8,\frac{\la}{2}-\cc7,\frac{\la}{2}-\cc5,\eta+\frac{\la}{2}}{\oy,x}}-\frac{\oy-\cd{\eta-\cc6+\la}}{\oy-\cd{\eta-\cc5+\la}}\,\frac{\pQ{\frac{\la}{2}-\cc4,\frac{\la}{2}-\cc3,\frac{\la}{2}-\cc5,\frac{\la}{2}-\cc1,\cc{1234},\eta+\frac{\la}{2}}{\oy}}{\gG{\frac{\la}{2}-\cc8,\frac{\la}{2}-\cc7,\frac{\la}{2}-\cc6,\frac{\la}{2}-\cc5,\eta+\frac{\la}{2}}\pP{\frac{\la}{2}-\cc8,\frac{\la}{2}-\cc7,\frac{\la}{2}-\cc6,\eta+\frac{\la}{2}}{\oy,x}}}.
\end{align}
\end{subequations}
}

Now consider the special choices of parameters: 
\begin{equation}
 \cc5+\cc4=\cc6+\cc3=\cc7+\cc2=\cc8+\cc1=\frac{\la}{2}.
\end{equation}
Then, the transformation $\RJ2$ acts on this parameter subspace and variables $x$, $y$ as
\begin{equation}
 \RJ2:
 \left(\cc1,\cc2,\cc3,\cc4,\eta,x,y\right)
 \mapsto
 \left(\cc1,\cc2,\cc3,\cc4,\eta+\frac{\la}{2},\tx,x\right).
\end{equation}
This leads to a scalar second-order mapping for $x$ alone, given by 
\begin{align}\label{eqn:MSY_dP_PR}
 &\bfrac{\ttx-\cd{\eta+\cc1+\la}}{\ttx-\cd{\eta+\cc2+\la}}
 \bfrac{\tx-\cd{\eta-\cc2+\la}}{\tx-\cd{\eta-\cc1+\la}}\notag\\
 &\quad=\bfrac{1-\frac{\cd{\eta+\cc1+\la}}{\cd{3\eta-\cc3+3\la}}}{1-\frac{\cd{\eta+\cc2+\la}}{\cd{3\eta-\cc3+3\la}}}
 \bfrac{1-\frac{\cd{2\eta+\frac{\cc1-\cc2}{2}+2\la}}{\cd{\cc3+\frac{\cc1+\cc2}{2}}}}
 {1-\frac{\cd{2\eta-\frac{\cc1-\cc2}{2}+2\la}}{\cd{\cc3+\frac{\cc1+\cc2}{2}}}}
 \bfrac{1-\frac{\cd{2\eta-\frac{\cc1-\cc2}{2}+2\la}}{\cd{2\eta-\cc3+\frac{\cc1+\cc2}{2}+2\la}}}
 {1-\frac{\cd{2\eta+\frac{\cc1-\cc2}{2}+2\la}}{\cd{2\eta-\cc3+\frac{\cc1+\cc2}{2}+2\la}}}
 \bfrac{1-\frac{\cd{\eta-\cc1+\frac{\cc{1234}}{2}+\la}}{\cd{\eta-\cc4+\frac{\cc{1234}}{2}+\la}}}
 {1-\frac{\cd{\eta-\cc2+\frac{\cc{1234}}{2}+\la}}{\cd{\eta-\cc4+\frac{\cc{1234}}{2}+\la}}}\notag\\
 &\qquad\times\bfrac{\frac{\pQ{\frac{\la}{2}-\cc4,\frac{\la}{2}-\cc3,\cc3,\frac{\la}{2}-\cc2,\cc{1234},\eta+\frac{\la}{2}}{\tx}}{\pP{\cc1,\cc2,\cc4,\eta+\frac{\la}{2}}{\tx,x}}-\frac{\tx-\cd{\eta+\cc3+\frac{\la}{2}}}{\tx-\cd{\eta+\cc4+\frac{\la}{2}}}\,\frac{\pQ{\frac{\la}{2}-\cc4,\frac{\la}{2}-\cc3,\cc4,\frac{\la}{2}-\cc2,\cc{1234},\eta+\frac{\la}{2}}{\tx}}{\gG{\cc1,\cc2,\cc3,\cc4,\eta+\frac{\la}{2}}\pP{\cc1,\cc2,\cc3,\eta+\frac{\la}{2}}{\tx,x}}}
 {\frac{\pQ{\frac{\la}{2}-\cc4,\frac{\la}{2}-\cc3,\cc3,\frac{\la}{2}-\cc1,\cc{1234},\eta+\frac{\la}{2}}{\tx}}{\pP{\cc1,\cc2,\cc4,\eta+\frac{\la}{2}}{\tx,x}}-\frac{\tx-\cd{\eta+\cc3+\frac{\la}{2}}}{\tx-\cd{\eta+\cc4+\frac{\la}{2}}}\,\frac{\pQ{\frac{\la}{2}-\cc4,\frac{\la}{2}-\cc3,\cc4,\frac{\la}{2}-\cc1,\cc{1234},\eta+\frac{\la}{2}}{\tx}}{\gG{\cc1,\cc2,\cc3,\cc4,\eta+\frac{\la}{2}}\pP{\cc1,\cc2,\cc3,\eta+\frac{\la}{2}}{\tx,x}}},
\end{align}
which is the projectively-reduced equation of the MSY elliptic Painlev\'e equation \eqref{eqns:MSY_dP}.

%%%%%%%%%%%%%%%%%%%%%%%%%%%%%%
%% Remark
%%%%%%%%%%%%%%%%%%%%%%%%%%%%%%
\begin{remark}\label{remark:differenceTj1Tj2}
Recall that a Kac translation $T_\al:\PicX\to \PicX$ is defined by
\begin{equation}
 T_\al(\la)=\la+(\de|\la)\al-\left(\frac{(\al|\al)(\de|\la)}{2}+(\al|\la)\right)\de,\quad
 \la\in \PicX,
\end{equation}
and, therefore, its squared length is given by $-(\al|\al)$.

In \cite{ORG2001:MR1877472}, Ohta {\it et al.} discovered that NVs are equivalent to each other under the operations of the Weyl group $\WEe=\langle s_0,\dots,s_8\rangle$
and the same holds for NNVs.
$\TJ1$ and $\TJ2$ are expressed by Kac transformations as the following:
\begin{equation}
 \TJ1=T_{2H_0-\sum_{i=5}^8E_i},\quad
 \TJ2=T_{H_0-H_1}.
\end{equation}
Since 
$$-\left(2H_0-\sum_{i=5}^8E_i\left|2H_0-\sum_{i=5}^8E_i\right)\right.=4,\quad -(H_0-H_1|H_0-H_1)=2,$$
$\TJ1$ and $\TJ2$ are a NNV and a NV, respectively.
In this sense, the translation $\TJ1$ differs from $\TJ2$.
\end{remark}
%%%%%%%%%%%%%%%%%%%%%%%%%%%%%%
%%%%%%%%%%%%%%%%%%%%%%%%%%%%%%

Remark \ref{remark:differenceTj1Tj2} shows that Equation \eqref{eqns:TJ1} forms a new system of elliptic difference equations, which differs from the MSY elliptic difference equation given originally by Sakai. 
Since the generic versions of equations are different, Equation \eqref{eqn:MSY_dP_PR} differs from Equation \eqref{eqns:RCGeqn}.
In this sense, Equation \eqref{eqn:MSY_dP_PR} is also a new  elliptic difference equation. 

%%%%%%%%%%%%%%%%%%%%%%%%%%%%%%%%%%%%%%%%%%%%%%%%%%%%%%%%%%%
%% Concluding remarks
%%%%%%%%%%%%%%%%%%%%%%%%%%%%%%%%%%%%%%%%%%%%%%%%%%%%%%%%%%%
\section{Concluding remarks}
\label{ConcludingRemarks}
In this paper, we have constructed the birational actions of the Cremona isometries $\tWEe$ for the $A_0^{(1)J}$-surface. 
Equation \eqref{eqns:RCGeqn} is realized in terms of birational actions of Cremona isometries for the first time.
Using birational actions, we further derived the elliptic Painlev\'e equations \eqref{eqns:TJ1}, \eqref{eqns:MSY_dP} and \eqref{eqn:MSY_dP_PR}.

Equation \eqref{eqns:MSY_dP} is a new expression of the important elliptic Painlev\'e equation studied in \cite{SakaiH2001:MR1882403,MurataM2004:MR2108677,MSY2003:MR1958273,ORG2002:MR1946932,ORG2001:MR1877472}. 
For a long time, this equation remained the unique elliptic Painlev\'e equation known with 8 parameters. 
In this paper, we provided a new elliptic Painlev\'e equation which also has 8 parameters. 
We showed that it must be different because it is realized in terms of translations in  $\tWEe$, which are not conjugate to those that give rise to the previously known elliptic difference equation.
Furthermore, we provided a projectively-reduced version of the MSY elliptic Painlev\'e equation, namely 
Equation \eqref{eqn:MSY_dP_PR}.
We also showed that Equation \eqref{eqns:TJ1} is a generic version of Equation \eqref{eqns:RCGeqn}. 

These results lead to more questions. 
The search for Lax pairs of these new equations remains open. 
Moreover, special function solutions of the projectively-reduced equations \eqref{eqn:MSY_dP_PR} and \eqref{eqns:RCGeqn}  remain to be found.  
%%%%%%%%%%%%%%%%%%%%%%%%%%%%%%
%% Acknowledgment
%%%%%%%%%%%%%%%%%%%%%%%%%%%%%%
\subsection*{Acknowledgment}
The authors would like to express their sincere thanks to Profs M. Noumi, Y. Ohta, T. Takenawa and Y. Yamada for inspiring and fruitful discussions.
This research was supported by an Australian Laureate Fellowship \# FL120100094 and grant \# DP160101728 from the Australian Research Council and JSPS KAKENHI Grant Number JP17J00092.
Nakazono gratefully acknowledges support from Erwin Schr\"{o}dinger International Institute for Mathematics and Physics (ESI) that enabled his attendance at the conference on Elliptic Hypergeometric Functions in Combinatorics, Integrable Systems and Physics where this work was presented in March 20-24, 2017.
%%%%%%%%%%%%%%%%%%%%%%%%%%%%%%%%%%%%%%%%%%%%%%%%%%%%%%%%%%%
%% Appendix 
%%%%%%%%%%%%%%%%%%%%%%%%%%%%%%%%%%%%%%%%%%%%%%%%%%%%%%%%%%%
\appendix
%%%%%%%%%%%%%%%%%%%%%%%%%%%%%%%%%%%%%%%%%%%%%%%%%%%%%%%%%%%
%% A 
%%%%%%%%%%%%%%%%%%%%%%%%%%%%%%%%%%%%%%%%%%%%%%%%%%%%%%%%%%%
\section{Jacobian elliptic functions}
\label{section:Appendix_A}
To make this paper self-contained, we recall needed information about Jacobian elliptic functions and define notations (see \cite[Chapter 22]{NIST:DLMF} and \cite{book_WW1996:course}).

The Jacobian elliptic functions are given by
\begin{equation}
 \sn{u}=\sn{u,k},\quad
 \cn{u}=\cn{u,k},\quad
 \dn{u}=\dn{u,k},\quad
 \cd{u}=\frac{\cn{u}}{\dn{u}},
\end{equation}
where the functions ${\rm sn}(u)$ and ${\rm cd}(u)$ are even, while the functions ${\rm cn}(u)$ and ${\rm dn}(u)$ are odd.
The Jacobi theta-functions are given by
\begin{equation}
 H(u)=H(u,k),\quad
 \Th(u)=\Th(u,k),\quad
 H_1(u)=H(u+K),\quad
 \Th_1(u)=\Th(u+K),
\end{equation}
where the functions $\Th(u)$, $H_1(u)$ and $\Th_1(u)$ are even, while the function $H(u)$ is odd
(see \cite[\S 21.62]{book_WW1996:course} and \cite[\S 16.31]{book_AS2012:handbook} for definitions of these standard functions).
Here, $k$ and $k'$ are the modulus and the complementary modulus of the elliptic sine, respectively, which satisfy 
\begin{equation}
 k^2+k'^2=1.
\end{equation}
$K=K(k)$ and $K'=K'(k)$ are complete elliptic integrals and also periodical parameters of Jacobian elliptic functions, e.g.
the periods of the function ${\rm sn}(u)$ are $4K$ and $2\ii K'$.
We note that the following formulae hold:
\begin{align}
 &\sn{u}=\frac{H(u)}{k^{1/2}\Th(u)},\quad
 \cn{u}=\frac{k'^{1/2}H_1(u)}{k^{1/2}\Th(u)},\quad
 \dn{u}=\frac{k'^{1/2}\Th_1(u)}{\Th(u)},\label{eqn:Jelliptic_JTheta}\\
 &\ssn{u}+\ccn{u}=1,\quad
 k^2\ssn{u}+\ddn{u}=1,\quad
 \cd{u}=\sn{u+K},\\
 &\sn{u+v}=\frac{\sn{u}^2-\sn{v}^2}{\sn{u}\cn{v}\dn{v}-\sn{v}\cn{u}\dn{u}},
 \label{eqn:add_Jsn}\\
 &\cd{u+4K}=\cd{u},\quad
 \cd{u+2\ii K'}=\cd{u},\\
 &\cd{u+2K}=-\cd{u},\quad
 \cd{u+\ii K'}=\dfrac{1}{k\,\cd{u}},\\
 &\Th(u)H(v)+H(u)\Th(v)
 =\frac{2\JH{\frac{u+v}{2}}\JT{\frac{u+v}{2}}\JHK{\frac{u-v}{2}}\JTK{\frac{u-v}{2}}}{H_1(0)\Th_1(0)}.
 \label{eqn:add_JTH}
\end{align}

Throughout this paper, we also use the functions 
$\gG{\aaa1,\aaa2,\aaa3,\aaa4,b}$,
$\pQ{\aaa1,\aaa2,\aaa3,\aaa4,\aaa5,b}{X}$ and 
$\pP{\aaa1,\aaa2,\aaa3,b}{X,Y}$
defined by
{\allowdisplaybreaks
\begin{align}
 &\gG{\aaa1,\aaa2,\aaa3,\aaa4,b}
 =\bcfrac{1-\frac{\cd{\aaa4+\frac{\aaa1+\aaa2}{2}}}{\cd{\aaa2+\frac{\aaa1+\aaa2}{2}}}}
 {1-\frac{\cd{\aaa3+\frac{\aaa1+\aaa2}{2}}}{\cd{\aaa2+\frac{\aaa1+\aaa2}{2}}}}
 \bcfrac{1-\frac{\cd{b-\aaa4}}{\cd{b-\aaa1}}}{1-\frac{\cd{b-\aaa3}}{\cd{b-\aaa1}}}
 \bcfrac{1-\frac{\cd{b+\aaa4-\frac{\aaa1+\aaa2+\aaa3+\aaa4}{2}}}{\cd{b+\aaa2+\frac{\aaa1+\aaa2+\aaa3+\aaa4}{2}}}}
 {1-\frac{\cd{b+\aaa3-\frac{\aaa1+\aaa2+\aaa3+\aaa4}{2}}}{\cd{b+\aaa2+\frac{\aaa1+\aaa2+\aaa3+\aaa4}{2}}}}
 \bcfrac{1-\frac{\cd{\aaa3+\frac{\aaa1+\aaa2}{2}}}{\cd{2b+\aaa2-\frac{\aaa1+\aaa2}{2}}}}
 {1-\frac{\cd{\aaa4+\frac{\aaa1+\aaa2}{2}}}{\cd{2b+\aaa2-\frac{\aaa1+\aaa2}{2}}}},\\
%%%
 &\pQ{\aaa1,\aaa2,\aaa3,\aaa4,\aaa5,b}{X}\notag\\
 &=\left(\cd{b+\aaa3-\frac{\aaa5}{2}}-\cd{b+\aaa2+\frac{\aaa5}{2}}\right)
 \left(\cd{b+\aaa1+\frac{\aaa5}{2}}-\cd{b+\aaa4+\frac{\aaa5}{2}}\right)\notag\\
 &\quad\times\Big(\cd{b+\aaa4}\cd{b+\aaa1}+\cd{b+\aaa2}X\Big)
 +\left(\cd{b+\aaa3-\frac{\aaa5}{2}}-\cd{b+\aaa1+\frac{\aaa5}{2}}\right)\notag\\
 &\quad\times\left(\cd{b+\aaa4+\frac{\aaa5}{2}}-\cd{b+\aaa2+\frac{\aaa5}{2}}\right)
 \Big(\cd{b+\aaa4}\cd{b+\aaa2}+\cd{b+\aaa1}X\Big)\notag\\
 &-\left(\cd{b+\aaa3-\frac{\aaa5}{2}}-\cd{b+\aaa4+\frac{\aaa5}{2}}\right)
 \left(\cd{b+\aaa1+\frac{\aaa5}{2}}-\cd{b+\aaa2+\frac{\aaa5}{2}}\right)\notag\\
 &\quad\times\Big(\cd{b+\aaa1}\cd{b+\aaa2}+\cd{b+\aaa4}X\Big),\\
%%%
 &\pP{\aaa1,\aaa2,\aaa3,b}{X,Y}=C_1 X Y+C_2 X+C_3 Y+C_4,
\end{align}
where
\begin{align*}
 C_1=&\Big(\cd{b-\aaa3}-\cd{b-\aaa2}\Big)\cd{b+\aaa1}
 +\Big(\cd{b-\aaa1}-\cd{b-\aaa3}\Big)\cd{b+\aaa2}\notag\\
 &+\Big(\cd{b-\aaa2}-\cd{b-\aaa1}\Big)\cd{b+\aaa3},\\
 C_2=&\Big(\cd{b-\aaa2}-\cd{b-\aaa3}\Big)\cd{b-\aaa1}\cd{b+\aaa1}\notag\\
 &+\Big(\cd{b-\aaa3}-\cd{b-\aaa1}\Big)\cd{b-\aaa2}\cd{b+\aaa2}\notag\\
 &+\Big(\cd{b-\aaa1}-\cd{b-\aaa2}\Big)\cd{b-\aaa3}\cd{b+\aaa3},\\
 C_3=&\Big(\cd{b+\aaa3}-\cd{b+\aaa2}\Big)\cd{b-\aaa1}\cd{b+\aaa1}\notag\\
 &+\Big(\cd{b+\aaa1}-\cd{b+\aaa3}\Big)\cd{b-\aaa2}\cd{b+\aaa2}\notag\\
 &+\Big(\cd{b+\aaa2}-\cd{b+\aaa1}\Big)\cd{b-\aaa3}\cd{b+\aaa3},\\
 C_4=&\Big(\cd{b+\aaa2}\cd{b-\aaa3}-\cd{b-\aaa2}\cd{b+\aaa3}\Big)\cd{b-\aaa1}\cd{b+\aaa1}\notag\\
 &+\Big(\cd{b+\aaa3}\cd{b-\aaa1}-\cd{b-\aaa3}\cd{b+\aaa1}\Big)\cd{b-\aaa2}\cd{b+\aaa2}\notag\\
 &+\Big(\cd{b+\aaa1}\cd{b-\aaa2}-\cd{b-\aaa1}\cd{b+\aaa2}\Big)\cd{b-\aaa3}\cd{b+\aaa3}.
\end{align*}
}
%%%%%%%%%%%%%%%%%%%%%%%%%%%%%%%%%%%%%%%%%%%%%%%%%%%%%%%
%%% B.
%%%%%%%%%%%%%%%%%%%%%%%%%%%%%%%%%%%%%%%%%%%%%%%%%%%%%%%
\section{Relation between the $A_0^{(1)}$- and $A_0^{(1)J}$-surface}
\label{section:Appendix_B}
In this section, we give the one-to-one correspondence between the $A_0^{(1)}$- and $A_0^{(1)J}$-surface, 
which are characterized by the base points \eqref{eqn:basepoints_W} and \eqref{eqn:basepoints}, respectively.

Using Landen transformations:
\begin{equation}
 \cn{u,k}=\frac{1-\frac{2}{1+k}\sn{\frac{1+k}{2}x,\frac{2k^{1/2}}{1+k}}^2}{\dn{\frac{1+k}{2}x,\frac{2k^{1/2}}{1+k}}},\quad
 \dn{u,k}=\frac{1-\frac{2k}{1+k}\sn{\frac{1+k}{2}x,\frac{2k^{1/2}}{1+k}}^2}{\dn{\frac{1+k}{2}x,\frac{2k^{1/2}}{1+k}}},
\end{equation}
and defining Weierstrass $\wp$ function in terms of the Jacobian elliptic function ${\rm sn}$ by
\begin{equation}
 \we{u;k}=\frac{e_1(k)-e_3(k)}{\ssn{(e_1(k)-e_3(k))^{1/2}u,k}}+e_3(k),
\end{equation}
where, $e_i(k)$, $i=1,2,3$, 
satisfy
\begin{equation}
 k=-\frac{(e_2(k)-e_3(k))^{1/2}}{(e_1(k)-e_3(k))^{1/2}},\quad
 k'=\frac{(e_1(k)-e_2(k))^{1/2}}{(e_1(k)-e_3(k))^{1/2}},\quad
 e_1(k)+e_2(k)+e_3(k)=0,
\end{equation}
we obtain the following one-to-one correspondence between Jacobian elliptic function ${\rm cd}$ and Weierstrass $\wp$ function:
\begin{equation}
 \cd{u,k}=\frac{(1+k)\we{\omega;l}-2 e_1(l)+(1-k)e_3(l)}
 {(1+k)\we{\omega;l}-2k e_1(l)-(1-k)e_3(l)},
\end{equation}
where 
\begin{equation}
 \omega=\frac{(1+k)u}{2(e_1(l)-e_3(l))^{1/2}},\quad
 l=\frac{2k^{1/2}}{1+k}.
\end{equation}
By defining new coordinates $(f,g)$ and parameters $b_i$, $i=1,\dots,8$, and $t$ as the following:
\begin{subequations}\label{eqns:def_fgbt}
\begin{align}
 &x=\frac{(1+k)f-2 e_1(l)+(1-k)e_3(l)}{(1+k)f-2k e_1(l)-(1-k)e_3(l)},\quad
 y=\frac{(1+k)g-2 e_1(l)+(1-k)e_3(l)}{(1+k)g-2k e_1(l)-(1-k)e_3(l)},\\
 &b_i=\frac{(1+k)\cc{i}}{2(e_1(l)-e_3(l))^{1/2}},\quad i=1,\dots,8,\quad
 t=\frac{(1+k)\eta}{2(e_1(l)-e_3(l))^{1/2}},
\end{align}
\end{subequations}
the eight base points \eqref{eqn:basepoints} can be rewritten as
\begin{equation}\label{eqn:basepoints_W}
 p_i:(f,g)=\big(\we{b_i+t},\we{t-b_i}\big),\quad i=1,\dots,8,
\end{equation}
which is the Weierstrass $\wp$ function's setting ($A_0^{(1)}$-surface) investigated in \cite{MSY2003:MR1958273}.
Therefore, the relations \eqref{eqns:def_fgbt} give the one-to-one correspondence between the $A_0^{(1)}$- and $A_0^{(1)J}$-surface.
%%%%%%%%%%%%%%%%%%%%%%%%%%%%%%%%%%%%%%%%%%%%%%%%%%%%%%%%%%%
%%% References 
%%%%%%%%%%%%%%%%%%%%%%%%%%%%%%%%%%%%%%%%%%%%%%%%%%%%%%%%%%%
\def\cprime{$'$} \def\cprime{$'$}

\end{document}